\renewcommand\footnotetextcopyrightpermission[1]{} 
\newtheorem{proposition}{Proposition}
\begin{document}

\title{COREATTACK: Breaking Up the Core Structure of Graphs}

\author{Bo Zhou}
\authornote{Both authors contributed equally to this research.}
\email{zb@zjvtit.edu.cn}
\affiliation{
  \institution{Zhejiang University of Technology}
  \city{Hangzhou}
  \state{Zhejiang}
  \country{China}
  \postcode{310023}
}

\author{Yuqian Lv}
\author{Jinhuan Wang}
\author{Jian Zhang}
\affiliation{
  \institution{Zhejiang University of Technology}
  \city{Hangzhou}
  \state{Zhejiang}
  \country{Chinal}}
  
\author{Qi Xuan}
\authornotemark[1]
\email{xuanqi@zjut.edu.cn}
\affiliation{
  \institution{Zhejiang University of Technology}
  \city{Hangzhou}
  \state{Zhejiang}
  \country{China}
  \postcode{310023}
}








\balance
\begin{abstract}


The concept of \emph{k}-core in complex networks plays a key role in many applications, e.g., understanding the global structure, or identifying central/critical nodes, of a network. A malicious attacker with jamming ability can exploit the vulnerability of the \emph{k}-core structure to attack the network and invalidate the network analysis methods, e.g., reducing the \emph{k}-shell values of nodes can deceive graph algorithms, leading to the wrong decisions. In this paper, we investigate the robustness of the \emph{k}-core structure under adversarial attacks by deleting edges, for the first time. Firstly, we give the general definition of targeted \emph{k}-core attack, map it to the set cover problem which is NP-hard, and further introduce a series of evaluation metrics to measure the performance of attack methods. Then, we propose $Q$ index theoretically as the probability that the terminal node of an edge does not belong to the innermost core, which is further used to guide the design of our heuristic attack methods, namely COREATTACK and GreedyCOREATTACK. The experiments on a variety of real-world networks demonstrate that our methods behave much better than a series of baselines, in terms of much smaller Edge Change Rate (ECR) and False Attack Rate (FAR), achieving state-of-the-art attack performance. More impressively, for certain real-world networks, only deleting one edge from the \emph{k}-core may lead to the collapse of the innermost core, even if this core contains dozens of nodes. Such a phenomenon indicates that the $k$-core structure could be extremely vulnerable under adversarial attacks, and its robustness thus should be carefully addressed to ensure the security of many graph algorithms. 

\end{abstract}


\begin{CCSXML}
<ccs2012>
   <concept>
       <concept_id>10002950.10003624.10003633.10010918</concept_id>
       <concept_desc>Mathematics of computing~Approximation algorithms</concept_desc>
       <concept_significance>500</concept_significance>
       </concept>
   <concept>
       <concept_id>10003033.10003079.10011704</concept_id>
       <concept_desc>Networks~Network measurement</concept_desc>
       <concept_significance>500</concept_significance>
       </concept>
   <concept>
       <concept_id>10003033.10003079.10011672</concept_id>
       <concept_desc>Networks~Network performance analysis</concept_desc>
       <concept_significance>300</concept_significance>
       </concept>
   <concept>
        <concept_id>10003752.10003809.10003635</concept_id>
        <concept_desc>Theory of computation~Graph algorithms analysis</concept_desc>
        <concept_significance>300</concept_significance>
        </concept>
    <concept>
        <concept_id>10003752.10003809.10010047.10010051</concept_id>
        <concept_desc>Theory of computation~Adversary models</concept_desc>
        <concept_significance>300</concept_significance>
        </concept>
    <concept>
        <concept_id>10002950.10003624.10003633.10010917</concept_id>
        <concept_desc>Mathematics of computing~Graph algorithms</concept_desc>
        <concept_significance>300</concept_significance>
        </concept>
 </ccs2012>
\end{CCSXML}

\ccsdesc[300]{Mathematics of computing~Approximation algorithms}
\ccsdesc[300]{Networks~Network measurement}
\ccsdesc[300]{Networks~Network performance analysis}
\ccsdesc[300]{Theory of computation~Graph algorithms analysis}
\ccsdesc[300]{Theory of computation~Adversary models}
\ccsdesc[300]{Mathematics of computing~Graph algorithms}

\keywords{\emph{k}-core decomposition, Adversarial attack, Network science, Social network, Graph data mining, Structural robustness, AI security}


\maketitle

\section{Introduction}

Networks or graphs well represent various complex systems in our daily lives. Over the past few decades, many graph algorithms have been developed to gain knowledge from the structural properties of these networks~\cite{xuan2021graph}. One of the most widely used methods is called the \emph{k}-core decomposition. The \emph{k}-core~\cite{seidman1983network} of a graph, defined as the maximal subgraph such that every node has at least \emph{k} neighbors, have emerged as an important concept for understanding the global structure of networks, as well as for identifying \emph{central} nodes within a network. In fact, \emph{k}-cores represent cohesive subgroups of nodes, and have been used in a broad variety of important applications. In biology networks~\cite{PhysRevE.82.051911}, they were used to predict the feature of functional-unknown proteins~\cite{altaf2003prediction}, or understand the evolution of networks~\cite{wuchty2005peeling}. In ecology networks~\cite{garcia2017ranking,filho2018hierarchical}, they were used to construct a metabolic network of 17 plants covering unicellular organisms. In social networks~\cite{PhysRevE.70.056122,Cha_Haddadi_Benevenuto_Gummadi_2010}, they were adopted to identify the key nodes in the network, so as to measure the influence of users. In information spreading~\cite{PhysRevE.85.026116,miorandi2010k}, they were further extended to determine the impact of nodes in epidemics. Besides, the \emph{k}-core decomposition is also adopted in community detection~\cite{guimera2003self,giatsidis2011evaluating} as a method of assessing the nature of community collaboration. 
Since so many applications depend on the \emph{k}-core structure of a network, it is crucial to address the robustness of \emph{k}-core structure under adversarial attacks. In fact, adversarial attacks have been carefully investigated for various graph algorithms~\cite{shan2021adversarial}, such as node classification~\cite{bhagat2011node}, link prediction~\cite{9531428}, community detection~\cite{fortunato2010community}, and graph classification~\cite{zhang2018end}, to address their robustness under tiny perturbation on network structure. However, few studies focus on breaking up the core structure of a graph. 


\begin{figure}[ht]
  \centering
  \includegraphics[width=0.7\linewidth]{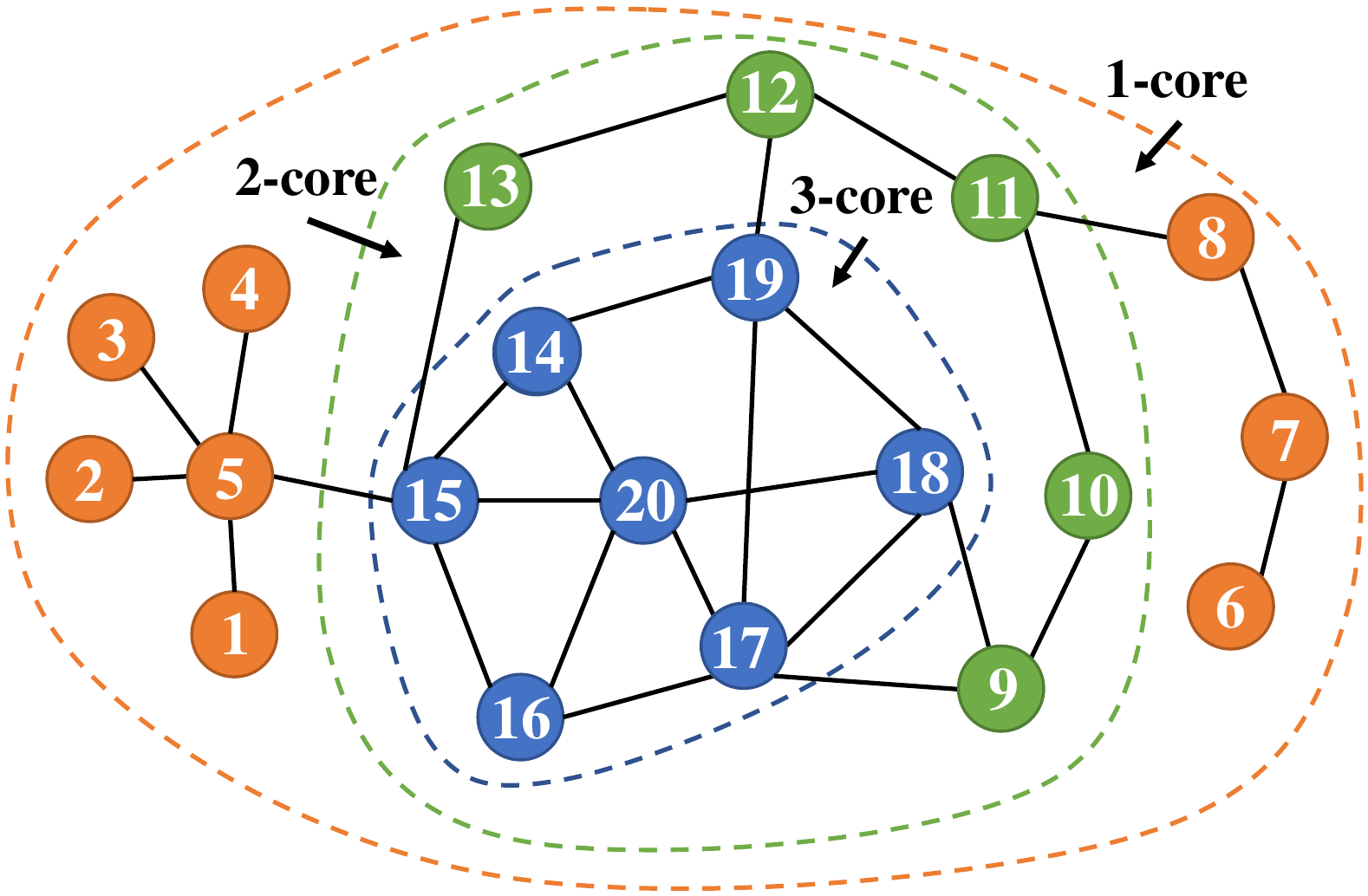}
  \caption{An example of the \emph{k}-core decomposition. The nodes within the dotted line of certain color form the \emph{k}-core of the graph, with $k$ varying from 1 to 3 here. And the nodes marked by the same color represent certain \emph{k}-shell.
  }
\label{figure:kcore}
\end{figure}

Generally, the \emph{k}-core of a graph $G$ can be obtained by recursively removing the nodes whose degrees are less than \emph{k}, with time complexity $\mathcal{O}(m)$~\cite{batagelj2003m}, where $m$ is the number of edges. 
Typically, the nodes in the higher cores are considered to be more central within the network. To test the robustness of the $k$-core structure, it is interesting to see at least how many links need to be changed to make the highest \emph{k}-core of the graph collapse. Taking the graph in Figure~\ref{figure:kcore} as an example, deleting edge $e_{17,19}$, but not $e_{17,20}$, will destroy the highest \emph{k}-core structure of the graph, i.e., the \emph{3}-core subgraph vanishes, although edge $e_{17,20}$ has the terminal nodes of larger degree and thus could be considered more crucial in traditional network analysis. This simple example indicates that the \emph{k}-core structure of a graph could be extremely vulnerable under adversarial attack when the appropriate edges are selected to change.


In view of this, we propose a heuristic algorithm, called COREATTACK, to determine which edges should be removed from a network to make the highest \emph{k}-core structure collapse. Generally, one can recursively execute the algorithm to attack any \emph{k}-core structure of a graph. Extensive experiments on 15 real-world networks show the effectiveness of our method. Especially, our contributions can be summarized as follows:
\begin{itemize}
    \item We formulate and study the problem of adversarial attack on \emph{k}-core structure of a graph by removing edges for the first time, which is a typical NP-hard problem. 
    \item We present a heuristic method, namely COREATTACK, to attack the innermost \emph{k}-core structure of a graph, with its effectiveness theoretically proved.
    
    \item We testify COREATTACK on a variety of real-world networks and find it outperforms a set of baseline methods, validating its state-of-the-art performance. 
    
\end{itemize}

The rest of this paper is structured as follows. In Section~\ref{sec:related}, we make a brief review on the previous studies of \emph{k}-core. In Section~\ref{sec:problem}, we give the concepts and problem definition used in this paper, with the proof that targeted \emph{k}-core attack is NP-hard. In Section~\ref{sec:method}, we introduce our COREATTACK and prove its effectiveness in theory. In Section~\ref{sec:exp}, we give evaluation metrics and verify COREATTACK on 15 real-world networks, compared with several baselines. Finally, we conclude the paper in Section~\ref{sec:conclusion}.

\section{RELATED WORKS}\label{sec:related}
In this section, we describe some related works on \emph{k}-core decomposition, \emph{k}-core percolation, and \emph{k}-core robustness.

\textbf{Core decomposition:} Erd{\"o}s and Hajnal~\cite{erdHos1966chromatic} gave the first \emph{k}-core related concept in 1966, defining the degeneracy of a graph as the maximum core number of a node in the graph. 
Almost simultaneously, Seidman~\cite{seidman1983network}, as well as Matula and Beck~\cite{matula1983smallest}, defined the \emph{k}-core subgraph as the maximal connected subgraph where each node has at least degree \emph{k}. Seidman stated that \emph{k}-cores are good seedbeds that can be used to find further dense substructures.
Thanks to the practical benefit and linear complexity of the \emph{k}-core decomposition, there has been a great deal of recent work in adapting \emph{k}-core algorithms for different data types or setups. Cheng \emph{et al}.~\cite{5767911} introduced the first external-memory
algorithm; Wen \emph{et al}.~\cite{wen2016efficient} and Khaouid \emph{et al}.~\cite{khaouid2015k} provided
further improvements in this direction. Giatsidis \emph{et al}. adapted the \emph{k}-core decomposition for weighted~\cite{giatsidis2013d} and directed~\cite{giatsidis2011evaluating} graphs.


\textbf{Core percolation:} In the field of studying the dynamic characteristics of \emph{k}-core, Baxter \emph{et al.}~\cite{baxter2015critical} analyzed the critical dynamics of the \emph{k}-core pruning process and showed that the pruning process exhibits three different behaviors depending on whether the mean degree $q$ of the initial network is above, equal to, or below the threshold $q_c$ corresponding to the emergence of the giant \emph{k}-core. In addition, Goltsev \emph{et al.}~\cite{goltsev2006k} developed the theory of the \emph{k}-core (bootstrap) percolation on uncorrelated random networks with arbitrary degree distributions and showed that a random removal of even one node from the \emph{k}-core may result in the collapse of a vast region of the \emph{k}-core around the removed node.

\textbf{Core robustness:} Recently, Zhou \emph{et al.}~\cite{zhou2021robustness} studied the robustness of \emph{k}-shell structure under adversarial attacks by proposing simulated annealing based \emph{k}-shell attack method and gave the results that the \emph{k}-shell structure of a network is robust under random perturbation, but is quite vulnerable under adversarial attack through simulations on several real-world networks. 
Zhang \emph{et al.}~\cite{zhang2017finding} investigated the collapsed \emph{k}-core problem to find the critical nodes. For a given \emph{k} value and a budget \emph{b}, they introduced an algorithms to delete \emph{b} (critical) nodes to get the smallest \emph{k}-core in size. Besides, Schmidt \emph{et al.}~\cite{schmidt2019minimal} recently investigated the minimal contagious set problem to find the smallest set of nodes in a network whose removal results in an empty \emph{k}-core. 
The algorithm of~\cite{schmidt2019minimal} runs by implementing the following two steps iteratively: first the \emph{k}-core of graph is obtained and second the node of the largest degree in the \emph{k}-core is removed.

Although some previous works may cause a small collapse of \emph{k}-core by deleting nodes, their main purpose is to study the characteristics of \emph{k}-core decomposition or percolation. Moreover, deleting nodes will generally damage the structure of the network much more significantly than deleting edges, making it less precise and practical. To the best of our knowledge, we are the first to propose the concept of targeted \emph{k}-core attack by deleting edges.


\section{problem statement} \label{sec:problem}
Now, we introduce the preliminaries used in this paper and give the problem definition, meanwhile, give the proof that targeted \emph{k}-core attack is an NP-hard problem. In the end, we propose $Q$ index to measure attack strategy and give numerical simulation.

\subsection{Preliminaries}
A network or graph (we will use the two terms indiscriminately) is denoted by $G\left( V,E\right) $, where $V$ and $E\subseteq(V\times V)$ are the sets of nodes and edges respectively, which represents the real-world entities and the relationship between them. In this paper, we only focus on undirected graphs without self-loops or weights of edges. Here, we give introductions of the basic concepts mentioned and the definitions used later. Here, we give a summary of notations in Table~\ref{tab:freq2} for convenience. 

\begin{table}[htbp]
  \caption{Summary of notations.}
  \label{tab:freq2}
  \begin{tabular*}{\hsize}{@{}@{\extracolsep{\fill}}lr@{}}
    \toprule[0.5mm]
    Notation    &Definition\\
    \midrule
    $\mathbcal{B}\left(v,G\right)$  &the adjacent nodes of $v$ in $G$      \\
    $k_{max}$ & the largest \emph{k}-shell value\\
    $G_I$ &innermost core subgraph of $G$\\
    $G_k$ &\emph{k}-core subgraph of $G$\\
    $\mathbcal{A}\left(G\right)$      &all node in the graph $G$     \\
    $E\left(G\right)$ & all edges in the graph $G$ \\
    $C_k\left(G\right)$     & the corona subgraph of \emph{k}-core      \\
    $C_I\left(G\right)$ &the corona subgraph of innermost core graph       \\
    $\Gamma\left(e\right)$       &collapsed nodes caused by deleting edges $e$     \\
    $\left|G\right|$         &the number of nodes in $G$      \\
    $\mathbcal{C}\left(v,G\right)$        &subgraph of node $v$ and its neighbors in graph $G$    \\
    $\Phi_{k}(\cdot)$ & extracting the \emph{k}-core subgraph from graph $G$\\
    %
    \bottomrule[0.5mm]
    \end{tabular*}
\end{table}


\textbf{\emph{k}-core subgraph:} The \emph{k}-core of a graph $G$ is its largest subgraph whose nodes have degree at least $k$~\cite{seidman1983network}, which is denoted as $G_k(V_k,E_k)$, satisfying $V_k \subset V$ and $E_k \subset E$. In other words, each of nodes in the \emph{k}-core has at least $k$ neighbours within this subgraph. 
The core number of a node is defined to be the largest \emph{k} value, i.e., there exists a \emph{k}-core (but we cannot find a $k'$-core satisfying $k'>k$) that contains the node. As shown in Figure~\ref{figure:kcore}, \emph{1}-core is the graph $G$, \emph{2}-core is the subgraph circled by the green dotted line, and \emph{3}-core is the subgraph circled by the blue dotted line.

\textbf{\emph{k}-shell subgraph:} The \emph{k}-shell is the subgraph of $G$, with its nodes belonging to the \emph{k}-core but not to the $(k+1)$-core~\cite{malliaros2020core}. Each node in the graph will have its own value of \emph{k}-shell which can be utilized to measure the importance of the node in the graph. Every node belongs to a unique \emph{k}-shell subgraph. As shown in Figure~\ref{figure:kcore}, the orange nodes belong to \emph{1}-shell, the green nodes belong to \emph{2}-shell, and the blue nodes belong to \emph{3}-shell.


\textbf{Innermost core:} The innermost core is the $k$-core with the largest \emph{k} value. For the graph in Figure~\ref{figure:kcore}, the largest \emph{k} value is 3, and thus the innermost core of this graph is \emph{3}-core. For convenience, we denote the innermost core as $G_I=(V_I,E_I)$, satisfying  $V_I \subset V_k$ and $E_I \subset E_k$ when $I>k$.

\begin{figure}[ht]
  \centering
  \includegraphics[width=0.8\linewidth]{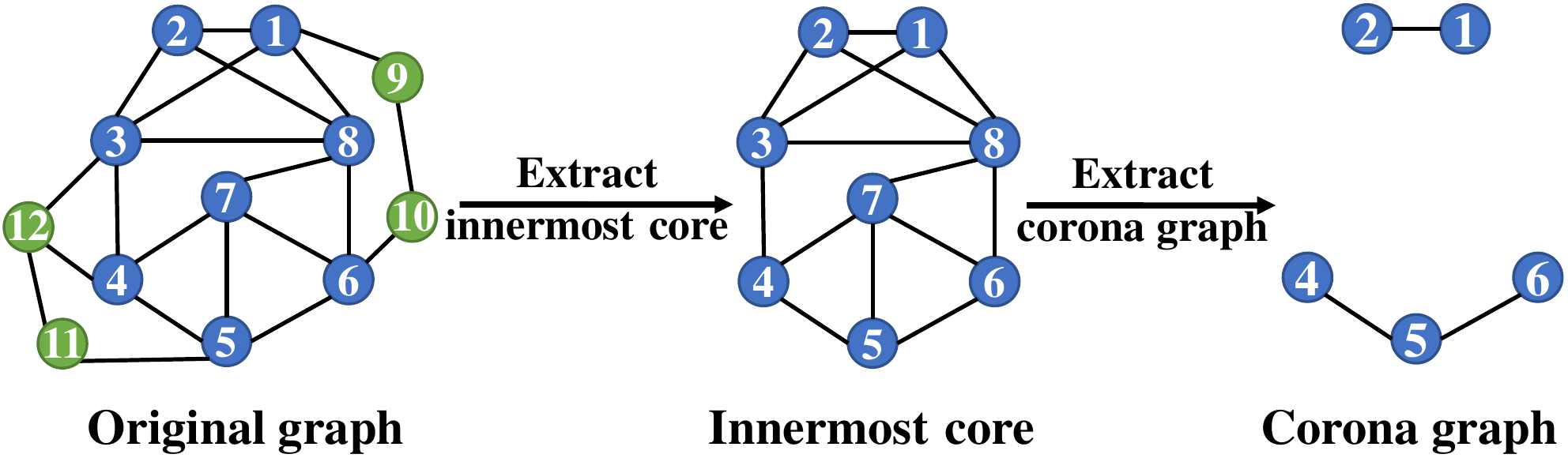}
  \caption{The process of extracting the innermost core and the corona graph from an original graph.}
\label{figure:corona}
\end{figure}

\textbf{Corona graph:} Given an innermost core of a graph $G_I=(V_I,E_I)$, then all the nodes within it have degree equal or larger than $I$. In this case, the corona graph is a subgraph of innermost core, where all the nodes have exactly \emph{I} neighbours~\cite{dorogovtsev2006k}. It should be noted that corona graph may be unconnected in reality, especially for those large networks. Figure~\ref{figure:corona} gives an example to extract the innermost core and corona graph from a simple graph. As we can see, the corona graph has two disconnected components: $\left\{1,2\right\}$ and $\left\{4,5,6\right\}$.

\subsection{Problem Definition}
Since innermost core plays a significant role in information transmission of the whole network, in this paper, we focus on effectively destroying the innermost core of a graph by deleting a set of appropriately selected edges.

\textbf{Targeted \emph{k}-core attack.} An adversary with full knowledge of $G$ seeks to precisely decrease the \emph{k}-shell value of the nodes in the innermost core, but keep the \emph{k}-shell values of the rest nodes the same, by removing as few edges as possible, which is formulated by
\begin{align*}
     &\mathop{\min} |\Delta E| \\
     s.t. &\left|\Phi_{I}(G \setminus \Delta E)\right| = 0
     \label{eq:Def}
\end{align*}
where $G$ is the original graph, $\Delta{E}$ represents the set of deleted edges, $\Phi_{I}(\cdot)$ is the operator to obtain the innermost core of $G$. $I$ is the larges \emph{k}-shell value of the original graph $G$.
Note that removing fewer edges corresponds to a lower cost to realize the attack, which will also make the attack less detectable based on network structure. 
\textbf{Gainer set of edge.} 
For a given innermost core of $G$, there are three types of edges in $G$: (i) edges connecting two nodes inside the innermost core; (ii) edges connecting two nodes outside the innermost core; (iii) edges connecting a node inside the innermost core and another outside the innermost core. Based on the \emph{k}-core decomposition method~\cite{seidman1983network}, both case (ii) and case (iii) do not change the structure of the innermost core, as a result, here we only need to consider case (i).
Suppose we delete edge $e \in E_I$ from the innermost core, a set of nodes $\Gamma(e) \subset V_I$ may not belong to innermost core any more. We name these nodes in $\Gamma(e)$ as the gainers of edge $e$, which is defined as
\begin{equation}
    \Gamma(e)=\Phi_I\left(G(V,E)\right)\setminus\Phi_I\left(G(V,E\setminus e)\right).
\end{equation}
As shown in Figure~\ref{figure:coreAttack}, the gainer set of edge $e_{1,2}$ is $\Gamma(e_{1,2}) = \{v_1,\cdots,v_8\}$, while that of edge $e_{7,8}$ is $\Gamma(e_{7,8})=\emptyset$. It's obvious that edge $e_{1,2}$ is much more important than $e_{7,8}$ to maintain the innermost core $G_I$, since $|\Gamma(e_{1,2})|=8 \gg |\Gamma(e_{7,8})|=0$.

\begin{figure}[h]
  \centering
  \includegraphics[width=.9\linewidth]{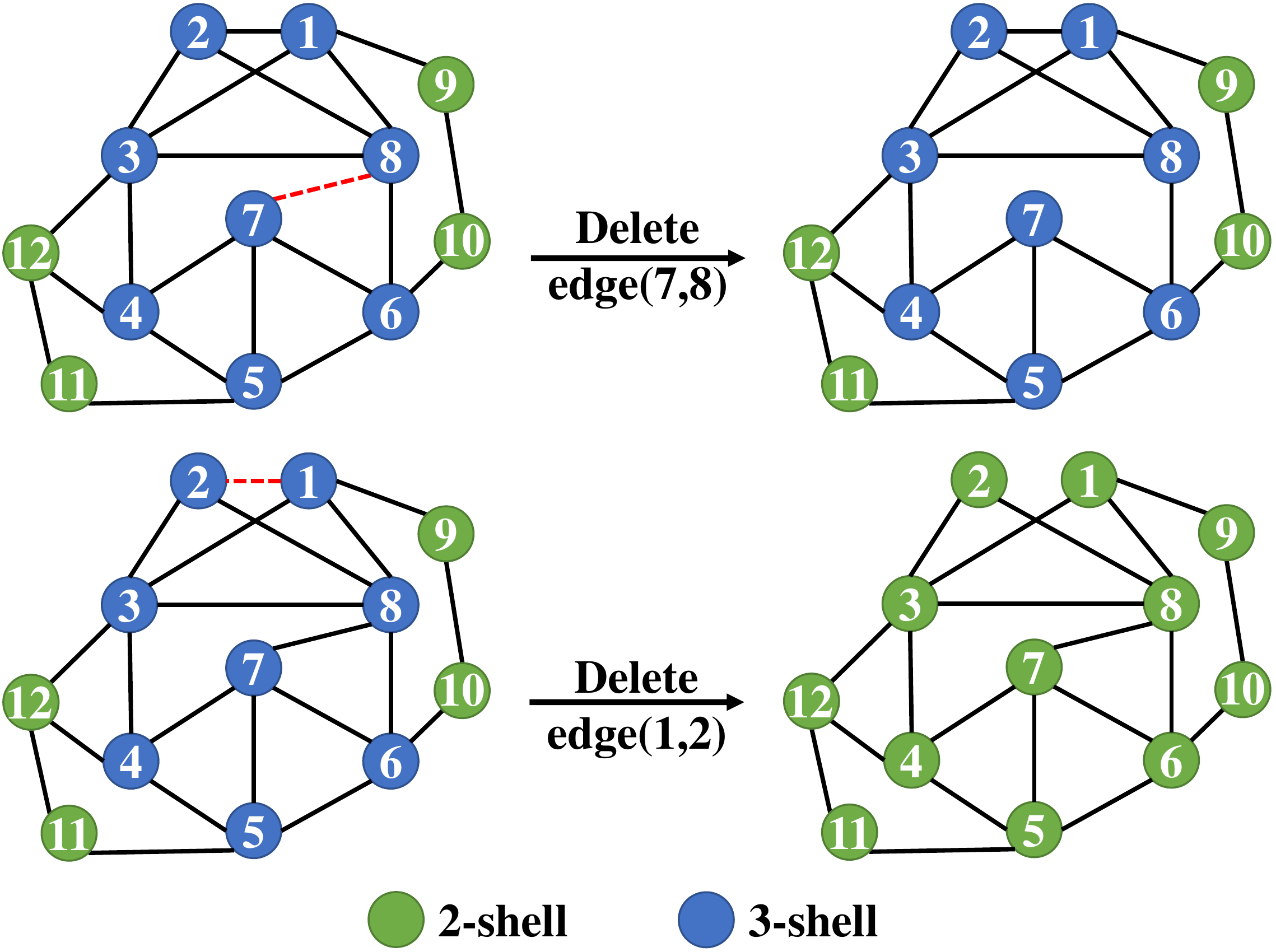}
  \caption{Given a graph $G$ with nodes $\left\{9,10,11,12\right\}$ belonging to \emph{2}-shell and nodes $\left\{1,2,3,4,5,6,7,8\right\}$ belonging to \emph{3}-shell (left), the \emph{3}-core doesn't collapse when edge $e_{7,8}$ is removed (top right); while it indeed totally collapses into \emph{2}-core when edge $e_{1,2}$ is deleted (bottom right).}
\label{figure:coreAttack}
\end{figure}

\textbf{NP-hard problem.} Mathematically, in \emph{k}-core attack, we try to use the minimum number of gainer sets $\Gamma(e), e\in E_I$, to cover the node set $V_I$ in the innermost core $G_I$, which is equivalent to the classic Set Cover problem~\cite{feige1998threshold}. In this sense, \emph{k}-core attack is a typical NP-hard problem~\cite{cormen2009introduction}. In particular, given the innermost core of a graph, i.e., $G_I=(V_I,E_I)$, suppose we need to delete at least $d$ edges to make the core totally collapse, in experiment, we thus need to enumerate all edge combinations from 1 to $d$ exhaustively and check their effects. The number of all possible combinations can be calculated by
\begin{equation}
    f = \sum_{i=1}^{d}C_{|E_I|}^{i}.
	\label{eqNP}
\end{equation}
Based on Eq.~(\ref{eqNP}), we can see that when the number of edges in the innermost core is in the thousands, the computational complexity could be extremely high even if $d$ is only in the tens. Therefore, suitable heuristic strategies are yearned for finding an effective edge combination to realize the attack. 

\subsection{Theoretical Analysis}

Generally, there are two possible ways to disintegrate the \emph{k}-core of a network: removing nodes or deleting edges. Removing a node from a network always involves a bunch of relating edges, which might make the attack higher cost and more noticeable. Thus, we consider edge deletion here as the way to achieve targeted \emph{k}-core attack. There are two types of deleted edges in the innermost core: (i) neither of the two terminal node of the deleted edge belong to corona subgraph; (ii) at least one terminal node of the deleted edge belongs to the corona subgraph.

Considering an uncorrelated $G=(V,E)$, the aim of \emph{k}-core attack is to disintegrate the \emph{k}-core of $G$ by deleting $L$ edges. For a randomly selected edge, its probability of not being deleted is $p = 1-\frac{L}{\left|E\right|}$, $P(i)$ is the degree distribution of graph $G$. And $z_1$ is the average degree of $G$, which is defined as 
\begin{equation}
    \begin{aligned}
        z_1=\sum_{i}^{\infty}iP(i)=\frac{2\left|E\right|}{\left|V\right|}.
    \end{aligned}
\end{equation}
If the edge is not deleted, follow the edge to a node of degree $i$, in which the probability of $l$ adjacent edges being deleted is calculated by
\begin{equation}
    \begin{aligned}
        \mathbcal{Q}(i,l)=\frac{i P(i)}{z_1}\binom{i-1}{l}(1-p)^l p^{i-l}.
    \end{aligned}
\end{equation}
If the edge is deleted, then the probability is
\begin{equation}
    \begin{aligned}
        \mathbcal{R}(i,l)=\frac{i P(i)}{z_1}\binom{i-1}{l-1}(1-p)^{l}p^{i-l}.
    \end{aligned}
\end{equation}
Let $Q$ the probability that a terminal node of an edge in graph $G$ does not belong to innermost core, when we randomly select $L$ edges from $G$ to be removed, which can then be calculated by
\begin{equation}
\begin{aligned}
    Q = &\sum_{n=0}^{k-2}\sum_{l=0}^{L}\sum_{i=n+l}^{\infty}[\mathbcal{Q}(i+1,l)\binom{i-l}{n}Q^{i-n-l}(1-Q)^n]\\
    &+\sum_{n=0}^{k-1}\sum_{l=0}^{L-1}\sum_{i=n+l}^{\infty}[\mathbcal{R}(i+1,l+1)\binom{i-l}{n}Q^{i-n-l}(1-Q)^n].
\end{aligned}
\end{equation}

\begin{proposition}
    Given a network $G$ and its k-core $G_k$, a node $v \in G$ belongs to $G_k$ if and only if $\left| \mathcal{B}(v, G) \cap \mathcal{B}(v, G_k) \right| \geq k$.
\end{proposition}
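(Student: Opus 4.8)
The plan is to prove the two implications separately, resting on the defining maximality property of the $k$-core. First I would fix the reading of the notation: since $G_k$ is the subgraph of $G$ induced on the core vertex set $V_k$, the quantity $\left|\mathcal{B}(v,G)\cap\mathcal{B}(v,G_k)\right|$ counts exactly the neighbors of $v$ in $G$ that lie in $V_k$ — equivalently, the number of edges from $v$ into the core (for $v\notin G_k$ the symbol $\mathcal{B}(v,G_k)$ must be read as ``neighbors of $v$ in $G$ belonging to $V_k$'', otherwise the statement degenerates). The whole argument then invokes the standard characterization used throughout Section~\ref{sec:problem}: $G_k$ is the \emph{maximal} subgraph of $G$ in which every node has degree at least $k$, and this maximal subgraph is unique, being the union of all subgraphs of $G$ with minimum degree at least $k$.

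For the forward direction, suppose $v\in G_k$. By the very definition of the $k$-core, every node of $G_k$ has at least $k$ neighbors inside $G_k$, so $v$ has at least $k$ neighbors lying in $V_k$, whence $\left|\mathcal{B}(v,G)\cap\mathcal{B}(v,G_k)\right|\ge k$. This implication is immediate and demands no real work.

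The substantive direction is the converse. Assuming $\left|\mathcal{B}(v,G)\cap\mathcal{B}(v,G_k)\right|\ge k$, that is, $v$ has at least $k$ neighbors in $V_k$, I would consider the subgraph $H$ of $G$ induced on $V_k\cup\{v\}$ and verify that $H$ has minimum degree at least $k$. Each original core node $u\in V_k$ retains all of its $G_k$-neighbors and may additionally gain $v$, so its degree in $H$ is at least its degree in $G_k$, which is $\ge k$; and $v$ itself has degree $\ge k$ in $H$ by hypothesis, since there are no self-loops. Therefore $H$ is a subgraph of $G$ with minimum degree $\ge k$, and by the maximality and uniqueness of the $k$-core we conclude $H\subseteq G_k$, in particular $v\in G_k$. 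Equivalently, one may argue by contraposition: were $v\notin G_k$, adjoining $v$ would produce a strictly larger subgraph of minimum degree $\ge k$, contradicting the maximality of $G_k$.

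I expect the main obstacle to be one of rigor rather than of ideas, concentrated in two places. The first is justifying cleanly that adjoining a single vertex which has $\ge k$ core-neighbors cannot lower the degree of any existing core node, so that the min-degree-$k$ property genuinely persists in $H$. The second is invoking the maximality characterization in exactly the right form so that the conclusion $v\in G_k$ follows without circularity — this is where the uniqueness of the $k$-core as the union of all min-degree-$k$ subgraphs does the essential work.
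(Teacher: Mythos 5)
Your proof is correct, and for the substantive direction it is genuinely more careful than the paper's. The paper's own argument for ``$\left|\mathcal{B}(v,G)\cap\mathcal{B}(v,G_k)\right|\geq k \Rightarrow v\in G_k$'' is essentially a one-line assertion: it says the hypothesis implies $|\mathcal{B}(v,G_k)|> k$ and concludes ``thus $v$ belongs to $G_k$,'' never invoking the maximality of the $k$-core and silently using the notation $\mathcal{B}(v,G_k)$ for a vertex not yet known to lie in $G_k$ (it also drifts between $>k$ and $\geq k$, which does not match the statement). You identify exactly this notational ambiguity up front, and you close the logical gap by the standard augmentation argument: form $H$ induced on $V_k\cup\{v\}$, check that every old core vertex keeps degree $\geq k$ and that $v$ has degree $\geq k$ in $H$, and then appeal to the uniqueness of the $k$-core as the maximal (equivalently, the union of all) min-degree-$k$ subgraphs to get $H\subseteq G_k$. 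That maximality step is precisely what the paper's proof omits and what makes the converse non-circular. The forward direction is identical in both. In short, your version proves the same proposition by the same underlying definition, but supplies the missing maximality argument; the paper's version is shorter only because it takes that step for granted.
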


\begin{proof}
For a node $ v \in G$, $\left| \mathcal{B}(v, G) \cap \mathcal{B}(v, G_k) \right| > k$ implies that $|\mathcal{B}(v, G_k)|> k$. Thus $v$ belongs to $G_k$. If $v \in G_k$, then $|\mathcal{B}(v, G_k)| > k$. Since $G_k \subset G$, we have $\left| \mathcal{B}(v, G) \cap \mathcal{B}(v, G_k) \right| \geq k$.
\end{proof}

The proposition indicates that we should focus on the nodes that exactly have $k$ neighbors in $G_k$, i.e. the nodes in the corona. And we conduct simulations on Erd{\"o}s-R{\'e}nyi (ER) random graphs to prove the hypothesis. In an ER random graph, we consider edge-deletion case (i) and case (ii) to explore their influence on $Q$. Under these two cases, our numerical simulations are given the same number of deleted edges. It can be seen from the Figure~\ref{figure:ER_Q} that deleting the edge satisfying case (i) has little effect on the $Q$ index of the whole network, while deleting the edge satisfying case (ii) will cause the innermost core sudden disappearance. This phenomenon in physics is called percolation, which means that the innermost core immediately collapse when certain number of edges are removed.

In fact, when only considering the edges relating to the corona in edge deletion, there always exists a mutation after a few edges are removed and then the \emph{k}-core collapsed. It strongly implies the significant influence that such edges have on the structure of \emph{k}-core, which could be used in the design of \emph{k}-core attack.

\begin{figure}[t]
    \centering
    \subfigure{\includegraphics[width=.49\linewidth]{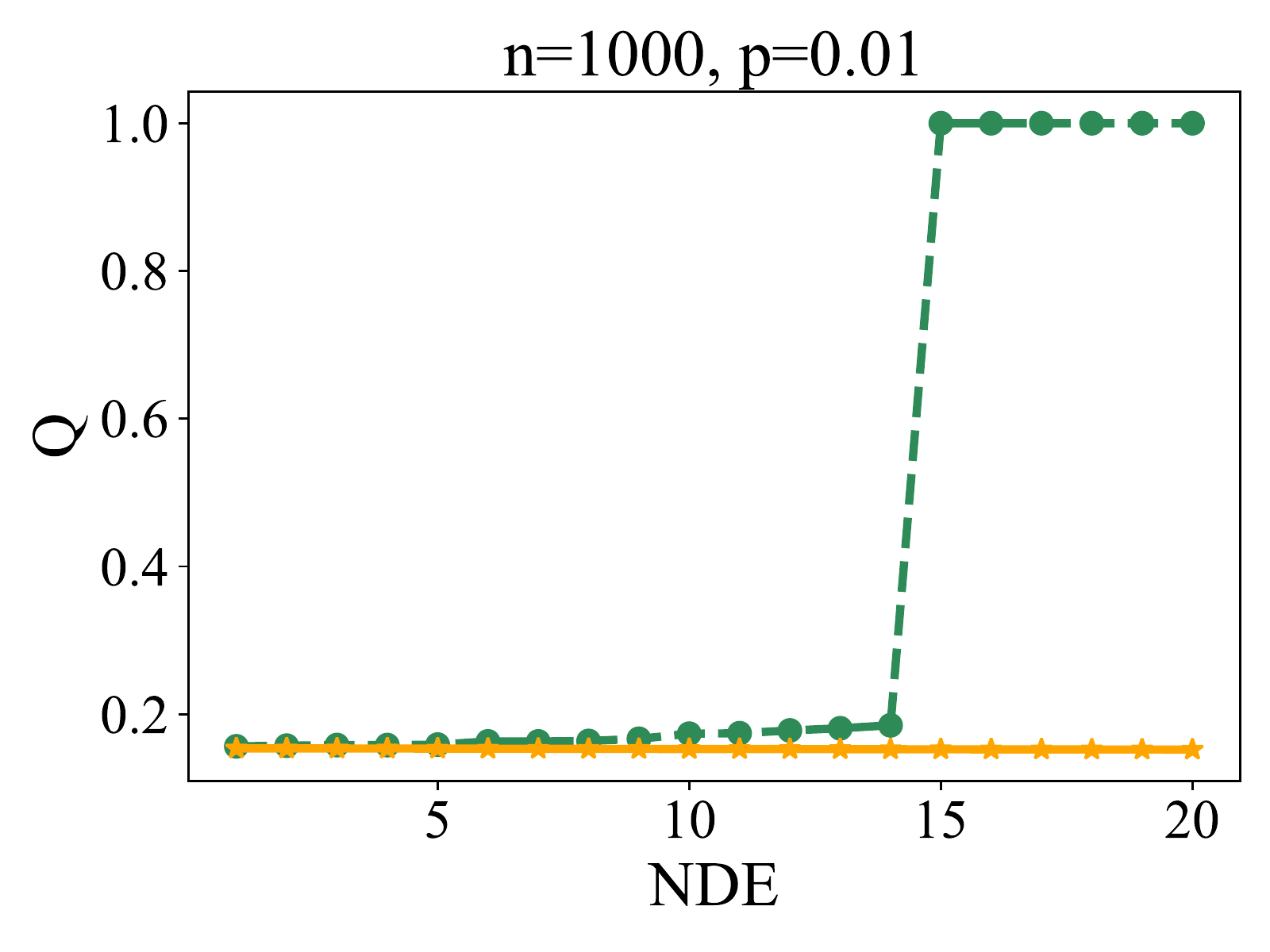}\label{4.a}}
    \subfigure{\includegraphics[width=.49\linewidth]{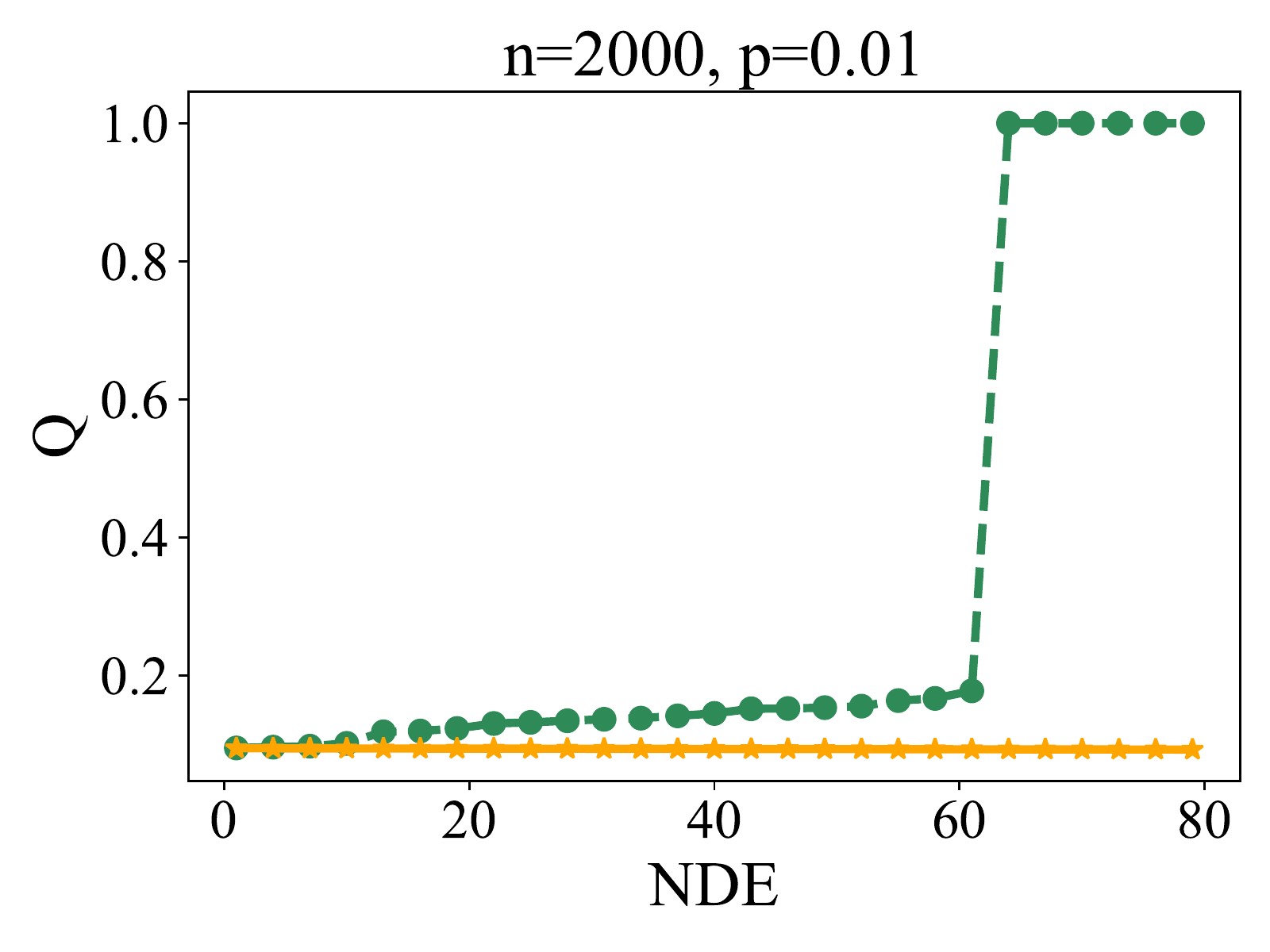}\label{4.b}}
    \subfigure{\includegraphics[width=.8\linewidth]{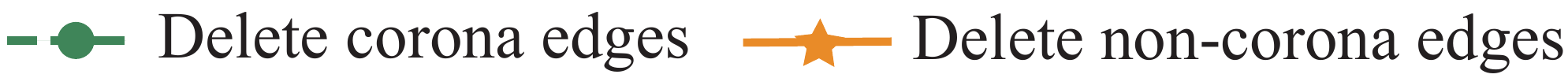}}
    \vspace{-4mm}
    \caption{The influence of different edge deletion strategies on $Q$ index with different ER random graphs, where $x$-axis is the number of deleted edges, $\text{NDE}=L$.}
    \label{figure:ER_Q}
\end{figure}

\section{methodology}\label{sec:method}


In this section, we propose a heuristic attack strategy, namely COREATTACK, to make the \emph{k}-core of a network collapse. And an optimized strategy based on maximum set cover is also presented to reduce the attack cost.

\begin{figure*}[htbp]
  \centering
  \includegraphics[width=0.8\linewidth]{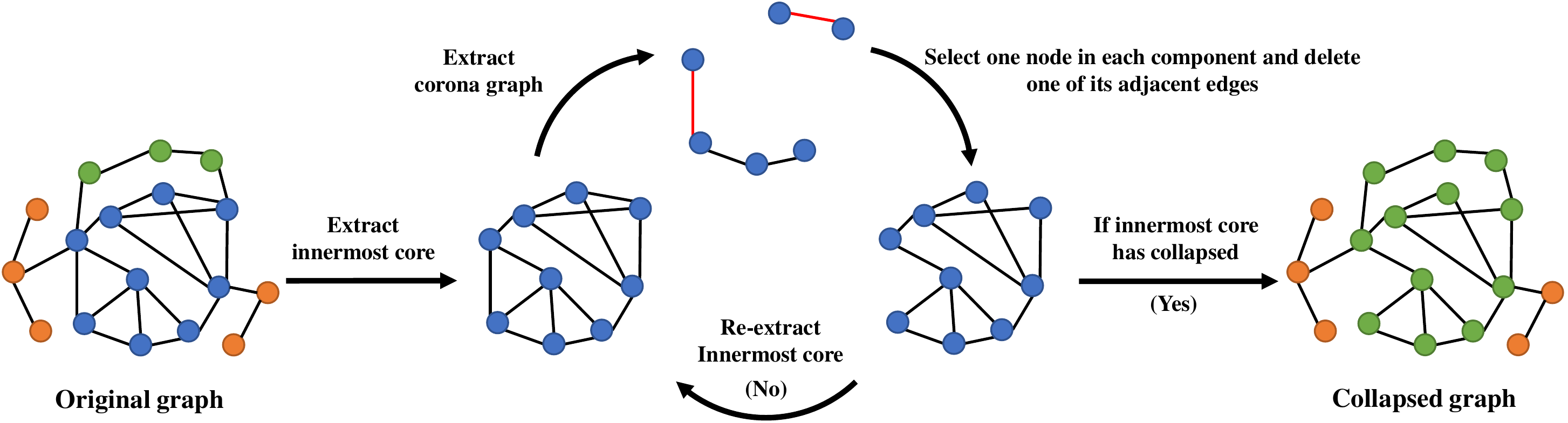}
  \caption{The framework of COREATTACK. We extract the innermost core from the original graph, and then the corona graph from the innermost core, where we get two components in this example. In each component, we select a node and delete one of its adjacent edges within innermost core. Afterwards, if the innermost core collapse after the attack, we output the collapsed graph, otherwise, we re-extract the innermost core from the attacked graph and repeat the above process.}
\label{figure:process}
\end{figure*}

\subsection{COREATTACK}
Different edges play different roles in a network and may have different effects on the collapse of the \emph{k}-core. As stated in Section~\ref{sec:problem}, we define such effect as the Gainer set of an edge. It is an intuitive yet comprehensible idea that the edges with larger Gainer set lie in the corona of a given $G_I$.

As shown in Figure~\ref{figure:coreAttack}, the terminal nodes of $e_{1,2}$ belong to the corona graph, making the edge an effective one for the collapse of the \emph{k}-core. However, the deletion of $e_{7,8}$ whose endpoints do not belong to the corona could not lead to the collapse of innermost core. In the meanwhile, we notice that the gainer sets of different edges might overlap with each other, in other words, there might be a case that an edge will be automatically removed from $G_I$ if another edge is deleted. Considering this, we have a proposition here to guide the edge deletion.

\begin{proposition}
    Given an innermost core $G_I \subset G$ and an edge $e_{i,j}\subset E_I$, we have $\Gamma(e_{u,v}) \subset \Gamma(e_{i,j})$ for $ \forall e_{u,v} \in E(\mathbcal{C}(\Omega, G_I))$ where $\Omega=\mathcal{A}(C_I(G)) \cap \Gamma(e_{i,j})$ and $\Omega \neq \emptyset$.
\end{proposition}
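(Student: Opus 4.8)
The plan is to recast the target inclusion $\Gamma(e_{u,v})\subseteq\Gamma(e_{i,j})$ as a \emph{reverse} inclusion between the two surviving cores and then verify it through the maximality property of the $I$-core. Write $S_e := V\!\left(\Phi_I(G\setminus e)\right)$ for the node set of the innermost core that remains after deleting a single edge $e$. Since deleting an edge can only shrink the $I$-core, monotonicity gives $S_e\subseteq V_I$, and from the defining identity $\Gamma(e)=V_I\setminus S_e$ one obtains the equivalence
\begin{equation*}
\Gamma(e_{u,v})\subseteq\Gamma(e_{i,j})\quad\Longleftrightarrow\quad S_{i,j}\subseteq S_{u,v}.
\end{equation*}
So it suffices to show that every node surviving the deletion of $e_{i,j}$ also survives the deletion of $e_{u,v}$.

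The main tool I would invoke is the standard characterization of the $I$-core as the unique maximal vertex set whose induced subgraph has minimum degree at least $I$; concretely, any set $S$ for which $(G\setminus e_{u,v})[S]$ has minimum degree $\ge I$ must satisfy $S\subseteq S_{u,v}$. I would apply this with $S=S_{i,j}$. By definition of the $I$-core of $G\setminus e_{i,j}$, the induced subgraph $(G\setminus e_{i,j})[S_{i,j}]$ already has minimum degree $\ge I$, and since restoring the single edge $e_{i,j}$ can only raise degrees, the subgraph $G[S_{i,j}]$ has minimum degree $\ge I$ as well.

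The crux — and the step I expect to be the real obstacle, because $e_{i,j}$ and $e_{u,v}$ are two unrelated single-edge deletions rather than nested ones, so no direct monotonicity applies — is to argue that passing from $G$ to $G\setminus e_{u,v}$ leaves the induced subgraph on $S_{i,j}$ untouched. Here the hypothesis $e_{u,v}\in E(\mathcal{C}(\Omega,G_I))$ enters: it guarantees an endpoint of $e_{u,v}$ lies in $\Omega=\mathcal{A}(C_I(G))\cap\Gamma(e_{i,j})$, say $u\in\Omega$. Because $\Omega\subseteq\Gamma(e_{i,j})=V_I\setminus S_{i,j}$, the endpoint $u$ lies outside $S_{i,j}$, so $e_{u,v}$ is not an internal edge of $S_{i,j}$ and therefore $(G\setminus e_{u,v})[S_{i,j}]=G[S_{i,j}]$. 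Combining this with the previous paragraph, $(G\setminus e_{u,v})[S_{i,j}]$ has minimum degree $\ge I$, whence $S_{i,j}\subseteq S_{u,v}$ and the claim follows. The condition $\Omega\neq\emptyset$ only serves to make $\mathcal{C}(\Omega,G_I)$ nonempty so the statement is not vacuous; I would also flag that the argument uses \emph{only} that $e_{u,v}$ is incident to a node of $\Omega$, so the one point worth pinning down precisely is that $E(\mathcal{C}(\Omega,G_I))$ indeed consists of edges meeting $\Omega$, which is exactly what the closed-neighborhood subgraph of $\Omega$ supplies.
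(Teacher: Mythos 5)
Your proof is correct and is substantially more rigorous than the one the paper gives. The paper's argument is a single informal cascade: deleting $e_{i,j}$ removes the nodes of $\Omega$, hence removes $e_{u,v}$ ``as well as its gainer set'' --- which essentially asserts the containment rather than proving it. You correctly identify why an argument is actually needed (the two deletions are unrelated single-edge perturbations, so no nesting monotonicity applies) and supply the missing ingredient: the characterization of the $I$-core as the unique maximal vertex set whose induced subgraph has minimum degree at least $I$. The reduction $\Gamma(e_{u,v})\subseteq\Gamma(e_{i,j})\Leftrightarrow S_{i,j}\subseteq S_{u,v}$, the observation that $G[S_{i,j}]$ still has minimum degree $\geq I$ after restoring $e_{i,j}$, and the key point that $e_{u,v}$ is not internal to $S_{i,j}$ (because an endpoint lies in $\Omega\subseteq\Gamma(e_{i,j})=V_I\setminus S_{i,j}$) together give a clean, complete derivation; what this buys over the paper's version is that the ``its gainer set is also removed'' step is replaced by an actual maximality argument. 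The one caveat you flag at the end deserves emphasis rather than reassurance: both your proof and the paper's only go through if every edge of $E(\mathcal{C}(\Omega,G_I))$ is incident to a node of $\Omega$. If $\mathcal{C}(\Omega,G_I)$ were read as the induced subgraph on the closed neighborhood of $\Omega$, it could contain an edge joining two surviving neighbors of $\Omega$, and for such an edge the claimed containment can fail; so the statement must be read with $E(\mathcal{C}(\Omega,G_I))$ denoting the edges meeting $\Omega$, which is what the paper's own proof implicitly assumes as well.
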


\begin{proof}
The deletion of $e_{i,j}$ could directly lead to the nodes in $\Omega$ being remove from $G_I$ by \emph{k}-core decomposition. For $\forall e_{u,v} \in E(\mathcal{C}(\Omega, G_I))$, it as well as its gainer set will also be removed from $G_I$ if $e_{i,j}$ is deleted, which implies $\Gamma(e_{u,v}) \subset \Gamma(e_{i,j})$.
\end{proof}



According to the proposition, we propose the first edge deletion strategy, \textbf{COREATTACK}. As described in Figure~\ref{figure:process}, COREATTACK extracts the innermost \emph{k}-core $G_I$ for input graph $G$ as the target for attack. In each attack iteration, it first finds the corona $C_I(G_I)$ and then randomly selects an edge, denoted by $e_{i,j}$ with at least one endpoint belonging to $C_I(G_I)$ for deletion. Afterwards, it collects the gainer set of $e_{i,j}$ which is denoted by $\Gamma(e_{i,j})$ and then removes the nodes in the set as well as relating edges from $C_I(G_I)$. When all the nodes and edges of the corona are removed, we obtain an edge set $E'$ for deletion in this iteration. We then update $G_I$ with $E'$ and run the next iteration. The attack procedure ends until $G_I$ collapses. The method is summarized in Algorithm~\ref{alg:1}.

\begin{algorithm}[h]
\caption{\textbf{COREATTACK}}
\LinesNumbered
\label{alg:1}
\KwIn{the original graph $G$, the core value $I$ of the innermost core under graph}
\KwOut{the set of deleted edges $E^{'}$.}
$G_I$ = MaxKSubgraph$(G)$\;
$E^{'}=\emptyset$\;
\While {$\left|G_I\right| > 0$}
{
    $C_I(G_I)=$CorSubgraph$(G_I)$\;
    \While{$\left|C_I(G_{I})\right| > 0$}
    {
        $v_i=$RandomSelect$\left(\mathbcal{A} \left(C_I\left(G_I\right)\right)\right)$\;
        $v_j =$RandomSelect$\left(\mathbcal{B}(v_i,G_I)\right)$\;
        $e_{i,j}=\left(v_i,v_j\right)$\;
        
        $C_I\left(G_I\right) = C_I\left(G_I\right)\setminus \Gamma\left(e_{i,j}\right)$\;
        $E^{'}=E^{'}\cup \{ e_{i,j} \}$\;
        
    }
    $G_I=G_I \setminus E^{'}$\;
    $G_I=$MaxKSubgraph$\left(G_I\right)$\;
}
\Return $E^{'}$\;
\end{algorithm}

\subsection{GreedyCOREATTACK}
Focusing on the edges in $G_I$ could indeed reduce the attack cost. The COREATTACK method is to delete the set of edges that can completely cover the corona node every search loop. Intuitively, this method may cause more edges to be deleted. Therefore, we improve the edge deletion procedure and propose \textbf{GreedyCOREATTACK} based on \textbf{COREATTACK}.

The main difference between the two methods is that we delete the edge with maximum gainer set rather than a random one in each iteration. And we name the edge as \emph{bomb edge}. To find the edge in a given $G_I$, we first random select an edge with at least one node belongs to the current corona subgraph and calculate the size of gainer set of the edge. Then, based on proposition 1, we exclude the gainer set of the edge from the corona subgraph. Finally, we could obtain the \emph{bomb edge} which have the largest size of gainer set. And the procedure is presented in Algorithm~\ref{alg:2}.

\begin{algorithm}[h]
\caption{\textbf{GreedyCOREATTACK}}
\LinesNumbered
\label{alg:2}
\KwIn{the original graph $G$, the core value $I$ of the innermost core under graph}
\KwOut{the set of deleted edges $E^{'}$.}
$G_I =$MaxKSubgraph$\left(G\right)$\;
$E^{'}=\emptyset$\;
\While {$G_I \neq \emptyset$}
{
    $C_I(G_I)=$CorSubgraph$\left(G_I\right)$\;
    $\tau=0$\;
    \While{$\left|C_I(G_I)\right|!=0$}
    {
        $v_i=$RandomSelect$\left(\mathbcal{A} \left(C_I\left(G_I\right)\right)\right)$\;
        $v_j =$RandomSelect$\left(\mathbcal{B}\left(v_i,G_I\right)\right)$\;
        $e_{i,j}=\left(v_i,v_j\right)$\;
        
        \If{$\left|\Gamma\left(e_{i,j}\right)\right|>\tau$}
        {
            $\tau=\left|\Gamma\left(e_{i,j}\right)\right|$\;
            $e^{'}=e_{i,j}$\;
        }
        $C_I\left(G_I\right)=C_I\left(G_I\right)\setminus \Gamma\left(e_{i,j}\right)$\;
    }
    $E^{'}=E^{'}\cup e^{'}$\;
    $G_I=G_I\setminus E^{'}$\;
    $G_I=$MaxKSubgraph$\left(G_I\right)$\;
}
\Return $E^{'}$\;
\end{algorithm}

\section{experiments}\label{sec:exp}
Now, we perform the experiments on 15 real-world networks of various sizes to demonstrate the performance of our COREATTACK and GreedyCOREATTACK methods, comparing with two baselines, under the four performance metrics introduced in Section~\ref{EM}.
\subsection{Evaluation Metrics}
\label{EM}
We use \emph{Number of Deleted Nodes} (NDN), \emph{Number of Deleted Edges} (NDE), \emph{Edge Change Rate} (ECR), and \emph{False Attack Rate} (FAR) to measure the effectiveness of various attacks. Note that we don't use \emph{Attack Success Rate} (ASR) here, since we always seek to destroy the innermost core as a whole and thus ASR will be equal to 1 for all the attack methods.

\begin{itemize}
\item \textbf{NDN:} If a node $v_{i} \in V$ is deleted from $G$, we set $\phi_{i}=1$, otherwise $\phi_{i}=0$, then, NDN is defined as:
\begin{equation}
	\centering
	\text{NDN}=\sum_{v_{i}\in{V}}\phi_{i}.
	\label{eqNDN}
\end{equation}

\item \textbf{NDE:} If an edge $e_{i,j} \in E$ is deleted from $G$, we set $\psi_{i,j}=1$, otherwise $\psi_{i,j}=0$, then, NDE is defined as:
\begin{equation}
	\centering
	\text{NDE}=\sum_{e_{i,j}\in{E}}\psi_{i,j}.
	\label{eqNDE}
\end{equation}

\item \textbf{ECR:} Since real-world networks are of varied size, therefore NDE may not be a suitable metric for the relative perturbation introduced by the attack. In this case, ECR may be a better metric, which is defined as:
\begin{equation}
	\centering
	\text{ECR}=\frac{\textit{NDE}}{|E|},
	\label{eqECR}
\end{equation}
where $|E|$ denotes the total number of edges in $G$. 


\item \textbf{FAR:} If the \emph{k}-shell of node $v_i$ outside the innermost core, i.e., $v_i \in V \setminus V_I$, is changed after the attack, we set $h_{i}=1$, otherwise $h_{i}=0$, then, FAR is defined as:
\begin{equation}
    \centering
    \text{FAR}=\frac{\sum_{v_i\in{V \setminus V_I}} h_{i}}{|V \setminus V_I|},
    \label{eqCSR}
\end{equation}
\end{itemize}
where $|V \setminus V_I|$ denotes the total number of nodes outside $G_I$. Note that NDN, NDE, and ECR are used to measure the budget to realize the attack, on the other hand, FAR is used to measure how precise the attack is. Therefore, the lower of these values, the better of the attack methods.

\subsection{Datasets}
The basic properties of 15 real-world networks are presented in Table~\ref{tab:dataset}. The labels (marked at the top-right corner of the name) indicate that these datasets are collected from different platforms. \emph{Dolphin} is a social network that describes the frequent associations between bottlenose dolphins living off Doubtful Sound, New Zealand~\cite{lusseau2003bottlenose}. \emph{Deezer}, \emph{Enron-Email}, \emph{GitHub}, and \emph{Gowalla} are collected from \url{http://snap.stanford.edu}~\cite{snapnets}. \emph{Human}, \emph{Web}, \emph{DBLP}, and \emph{FoodWeb} are collected from \url{https://networkrepository.com}~\cite{nr}. \emph{Proteins}, \emph{Autonomous}, \emph{Brightkite}, \emph{Cora}, \emph{JUNG-Javax} and \emph{US-Power} are collected from \url{http://konect.cc/networks/}~\cite{kunegis2013konect}. We treat these networks as undirected and unweighted, where self-loops and multiple edges are not allowed either.


\begin{table}[thbp]
  \caption{Basic properties of 15 real-world networks, i.e., the number of nodes $\left|V\right|$, the number of edges $\left|E\right|$, the largest \emph{k}-shell value $k_{max}$, the number of nodes in the innermost core $\left|V_I\right|$, the number of edges in the innermost core $\left|E_I\right|$. }
  \label{tab:dataset}
  \begin{tabular*}{\hsize}{@{}@{\extracolsep{\fill}}lrrrrr@{}}
    \toprule[0.5mm]
    Network&$\left|V\right|$&$\left|E\right|$&$k_{\max}$&$\left|V_{I}\right|$&$\left|E_{I}\right|$\\
    \midrule
    $\rm Dolphin^{\ast}$ &62 &159 &4 &36 &109\\
    $\rm Deezer^{\dagger}$&28,281&92,752 &12 &71 &564\\
    $\text{Enron-Email}^{\dagger}$& 36,692& 183,831 &43 &275 &9,633\\
    $\rm GitHub^{\dagger}$ &37,700 &289,003 &34 &360 &10,488\\
    $\rm Gowalla^{\dagger}$ &196,591 &950,327 &51 &185 &7,030\\
    $\text{Human}^{\star}$ &113 &2,196 &28 &60 &1,111\\
    $\rm Web^{\star}$ &163,598 &1,747,269 &101 &102 &5,151\\
    $\rm DBLP^{\star}$ &12,591 &49,620 &12 &916 &9,840\\
    $\rm FoodWeb^{\star}$ &128 &2,106 &24 &73 &1,193\\
    $\text{Proteins}^{\circ}$ &3,133 &6,149 &6 &137 &627\\
    $\text{Autonomous}^{\circ}$ &22,963 &48,436 &25 &71 &1,355 \\
    $\text{Brightkite}^{\circ}$ &58,228 &214,078 &52 &154 &5,919\\
    $\text{Cora}^{\circ}$ &23,166 &89,157 &13 &25 &218\\
    $\text{JUNG-Javax}^{\circ}$ &6,120 &50,290 &65 &135 &5,145\\
    $\text{US-Power}^{\circ}$ &4,941 &6,594 &5 &12 &36\\ 
  \bottomrule[0.5mm]
\end{tabular*}
\end{table}

\subsection{Baselines}
We adopt the four attack strategies, including Random Edge Deletion (RED), Collapsed $K$-Core (CKC) based on node deletion~\cite{zhang2017finding}, removing Highest Degree Nodes (HDN)~\cite{schmidt2019minimal}, and removing the Highest Degree Edges (HDE), as our baselines, which are described in detail as follows.

\begin{itemize}
    \item \textbf{RED} randomly removes an edge in the innermost core of graph, and recalculates the innermost core; these two steps are performed iteratively until the original innermost core collapses. Since RED is a random algorithm, it is implemented 10 times in each experiment, and the mean values of the performance metrics are recorded.  
    
    \item \textbf{CKC} was originally proposed to find the minimal node set whose removal results in an empty \emph{k}-core. Here, we transform the goal of this problem as: finding the minimal node set whose removal results in as many nodes in the innermost core that change the $k$-shell values as possible. To make the algorithm more practical, this method first obtains the corona graph, and then finds the minimal node set whose removal results in the decreasing of $k$-shell values of all the nodes in the corona graph, and then recalculate the innermost core and the corona graph; these two steps are performed iteratively, until the original innermost core collapses.
    
    \item \textbf{HDN} removes the nodes with the highest degree in the innermost core subgraph, and then recalculates the innermost core; these two steps are performed iteratively, until the original innermost core collapses. 
    
     \item \textbf{HDE} is a variant of HDN, which removes the edges with the highest degree in the innermost core subgraph, where the degree of an edge is denoted as the sum of terminal-node degrees, and then recalculate the innermost core; these two steps are performed iteratively, until the original innermost core collapses.
\end{itemize}
    

\begin{table*}[t]
\caption{The attack results of CKC and HDN as the node-deleting strategies. We can see that both CKC and HDN have relatively high ECR and FAR, indicating the high cost and low precision of these two methods.}
\label{tab:compare3}
\begin{tabular*}{\hsize}{@{}@{\extracolsep{\fill}}ccccc|cccc@{}}
\bottomrule[0.5mm]
\multirow{2}{*}{Network} & \multicolumn{4}{c}{CKC} & \multicolumn{4}{c}{HDN}\\ \cline{2-9} 
                         & NDN  & NDE & ECR(\%) &FAR(\%) & NDN  & NDE & ECR(\%) &FAR(\%)  \\ \hline
Dolphin                  & 11  & 79    & 49.6855 & 72.3214 & 4   & 40    & 25.1572 & 15.3846 \\
Deezer                   & 3   & 122   & 0.1315  & 0.4349  & 3   & 130   & 0.1402  & 0.2694  \\
Enron-Email              & 170 & 48055 & 26.1409 & 30.6361 & 2   & 1840  & 1.0009  & 3.2402  \\
GitHub                   & 1   & 7085  & 2.4515  & 18.1379 & 1   & 7085  & 2.4515  & 17.3487 \\
Gowalla                  & 1   & 306   & 0.0322  & 0.2579  & 1   & 14730 & 1.55    & 7.1602  \\
Human                    & 1   & 37    & 1.6849  & 59.0999 & 1   & 98    & 4.4627  & 73.5849 \\
Web                      & 1   & 117   & 0.0067  & 0.0709  & 1   & 110   & 0.0063  & 0.0055  \\
DBLP                     & 399 & 18137 & 36.5518 & 39.377  & 4   & 1943  & 3.9158  & 8.3683  \\
FoodWeb                  & 1   & 49    & 2.3267  & 69.3949 & 1   & 110   & 5.2232  & 67.2727 \\
Proteins                 & 32  & 1001  & 16.2791 & 22.0864 & 3   & 187   & 3.0411  & 6.6088  \\
Autonomous               & 1   & 523   & 1.0798  & 2.2863  & 1   & 697   & 1.439   & 2.4681  \\
Brightkite               & 1   & 267   & 0.1247  & 0.529   & 1   & 386   & 0.1803  & 0.3668  \\
Cora                     & 1   & 60    & 0.0673  & 0.2374  & 1   & 52    & 0.0583  & 0.1253  \\
JUNG-Javax               & 1   & 148   & 0.2943  & 2.9081  & 1   & 204   & 0.4056  & 0.7519  \\
US-Power                 & 2   & 18    & 0.273   & 0.2631  & 2   & 16    & 0.2426  & 0.0406 \\ \hline
Average                  & 42  & 5067  & 9.142   & 21.202  & 1.8 & 1842  & 3.285   & 13.533     \\
\toprule[0.5mm]
\end{tabular*}
\end{table*}

\begin{table*}[t]
\caption{The attack results of RED, HDE, COREATTACK, and GreedyCOREATTACK as edge-deleting strategies. For all the four methods, we have NDN=0 and FAR=0, as well as much lower ECR, for all the datasets, indicating their natural superiority compared with node-deleting strategies.}
\label{tab:compare2}
\begin{tabular*}{\hsize}{@{}@{\extracolsep{\fill}}cc|cc|cc|cc|cc|c@{}}
\bottomrule[0.5mm]
\multirow{2}{*}{Network} &\multicolumn{1}{c}{}     & \multicolumn{2}{c}{RED} & \multicolumn{2}{c}{HDE} & \multicolumn{2}{c}{COREATTACK} & \multicolumn{2}{c}{GreedyCOREATTACK} &         \\ \cline{2-11}
                         & NDN & NDE      & ECR(\%)      & NDE       & ECR(\%)       & NDE          & ECR(\%)         & NDE             & ECR(\%)            & FAR(\%) \\ \hline
Dolphin                  & 0   & 19       & 11.761        & 17        & 10.6918       & \underline{\textbf{15}}           & \underline{\textbf{9.434}}           & \underline{\textbf{15}}              & \underline{\textbf{9.434}}              & 0       \\
Deezer                   & 0   & 17       & 0.0179         & 9         & 0.0097        & \underline{\textbf{4}}            & \underline{\textbf{0.0043}}          & \underline{\textbf{4}}               & \underline{\textbf{0.0043}}             & 0       \\
Enron-Email              & 0   & 197      & 0.1069          & 150       & 0.0816        & \underline{\textbf{32}}           & \underline{\textbf{0.0174}}          & 35              & 0.019              & 0       \\
GitHub                   & 0   & 173      & 0.0599         & 147       & 0.0509        & 48           & 0.0166          & \underline{\textbf{31}}              & \underline{\textbf{0.0107}}             & 0       \\
Gowalla                  & 0   & 57       & 0.006        & 40        & 0.0042        & \underline{\textbf{2}}            & \underline{\textbf{0.0002}}          & \underline{\textbf{2}}               & \underline{\textbf{0.0002}}             & 0       \\
Human                    & 0   & 6        & 0.2732         & 6         & 0.2732        & \underline{\textbf{1}}            & \underline{\textbf{0.0455}}          & \underline{\textbf{1}}               & \underline{\textbf{0.0455}}             & 0       \\
Web                      & 0   & \underline{\textbf{1}}        & \underline{\textbf{0.0001}}       & \underline{\textbf{1}}         & \underline{\textbf{0.0001}}        & \underline{\textbf{1}}            & \underline{\textbf{0.0001}}          & \underline{\textbf{1}}               & \underline{\textbf{0.0001}}             & 0       \\
DBLP                     & 0   & 441      & 0.8879         & 410       & 0.8263        & 240          & 0.4837          & \underline{\textbf{215}}             & \underline{\textbf{0.4333}}             & 0       \\
FoodWeb                  & 0   & 10       & 0.4748         & 6         & 0.2849        & \underline{\textbf{1}}            & \underline{\textbf{0.0475}}          & \underline{\textbf{1}}               & \underline{\textbf{0.0475}}             & 0       \\
Proteins                 & 0   & 33       & 0.5334         & 24        & 0.3903        & 20           & 0.3253          & 15              & 0.2439             & 0       \\
Autonomous               & 0   & 62       & 0.128          & 39        & 0.0805        & 18           & 0.0372          & \underline{\textbf{13}}              & \underline{\textbf{0.0268}}             & 0       \\
Brightkite               & 0   & 88       & 0.0411         & 18        & 0.0084         & 7            & 0.0033          & \underline{\textbf{3}}               & \underline{\textbf{0.0014}}             & 0       \\
Cora                     & 0   & 8        & 0.008        & 3         & 0.0034        & \underline{\textbf{2}}            & \underline{\textbf{0.0022}}          & \underline{\textbf{2}}               & \underline{\textbf{0.0022}}             & 0       \\
JUNG-Javax               & 0   & 23       & 0.0447         & 23        & 0.0457        & 26           & 0.0517          & \underline{\textbf{8}}               & \underline{\textbf{0.0159}}             & 0       \\
US-Power                 & 0   & 3        & 0.044         & 4         & 0.0607        & \underline{\textbf{2}}            & \underline{\textbf{0.0303}}          & \underline{\textbf{2}}               & \underline{\textbf{0.0303}}             & 0      \\ \hline
Average       & 0 & 67.5   & 0.954  & 92.4 & 1.5 & 27.9 & 0.7 & \underline{\textbf{23.2}} & \underline{\textbf{0.688}} &0\\
\toprule[0.5mm]
\end{tabular*}
\end{table*}

\begin{figure*}[t]
    \centering
    \subfigure{\includegraphics[width=0.24\linewidth]{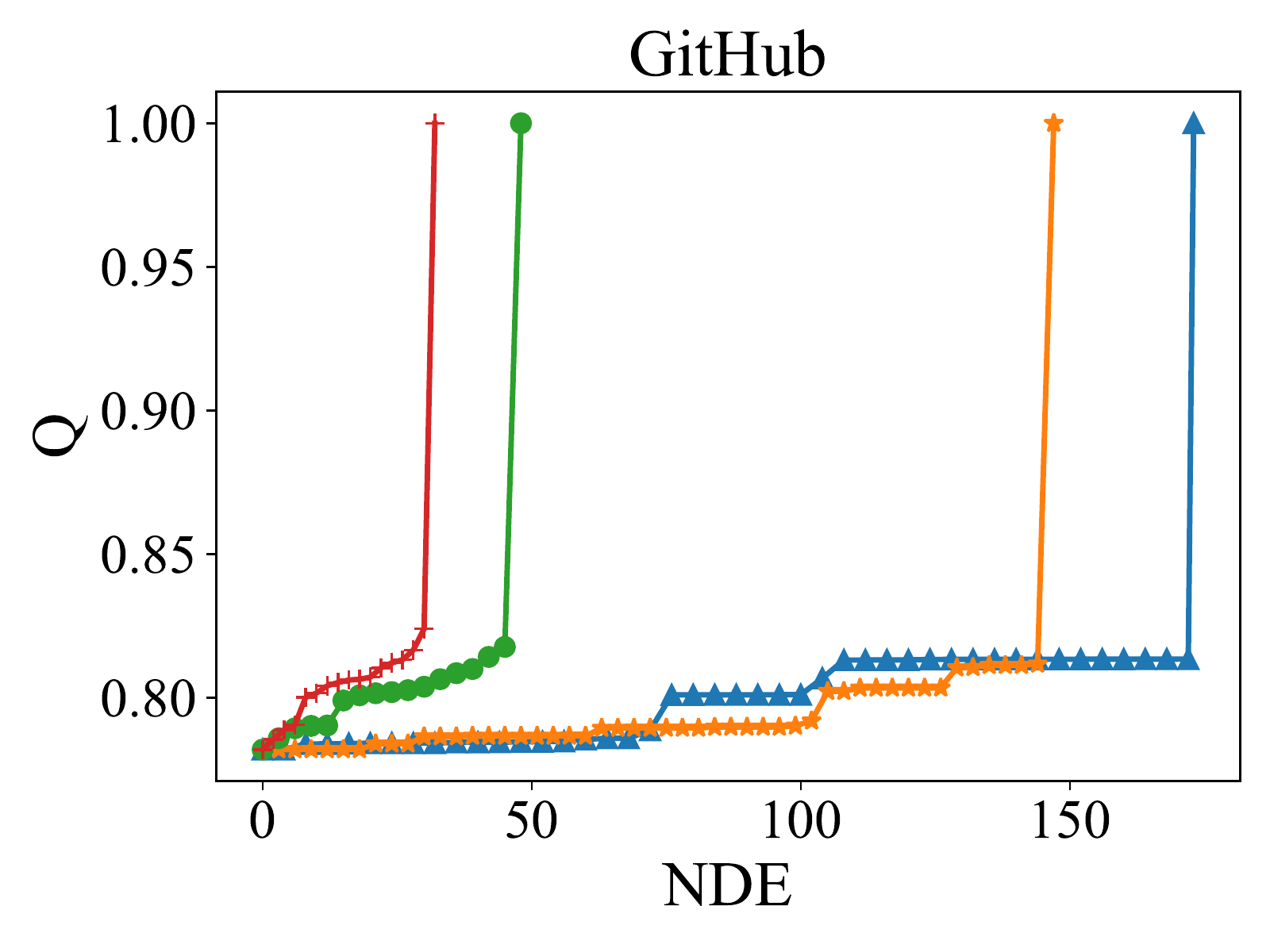}}
    \subfigure{\includegraphics[width=0.24\linewidth]{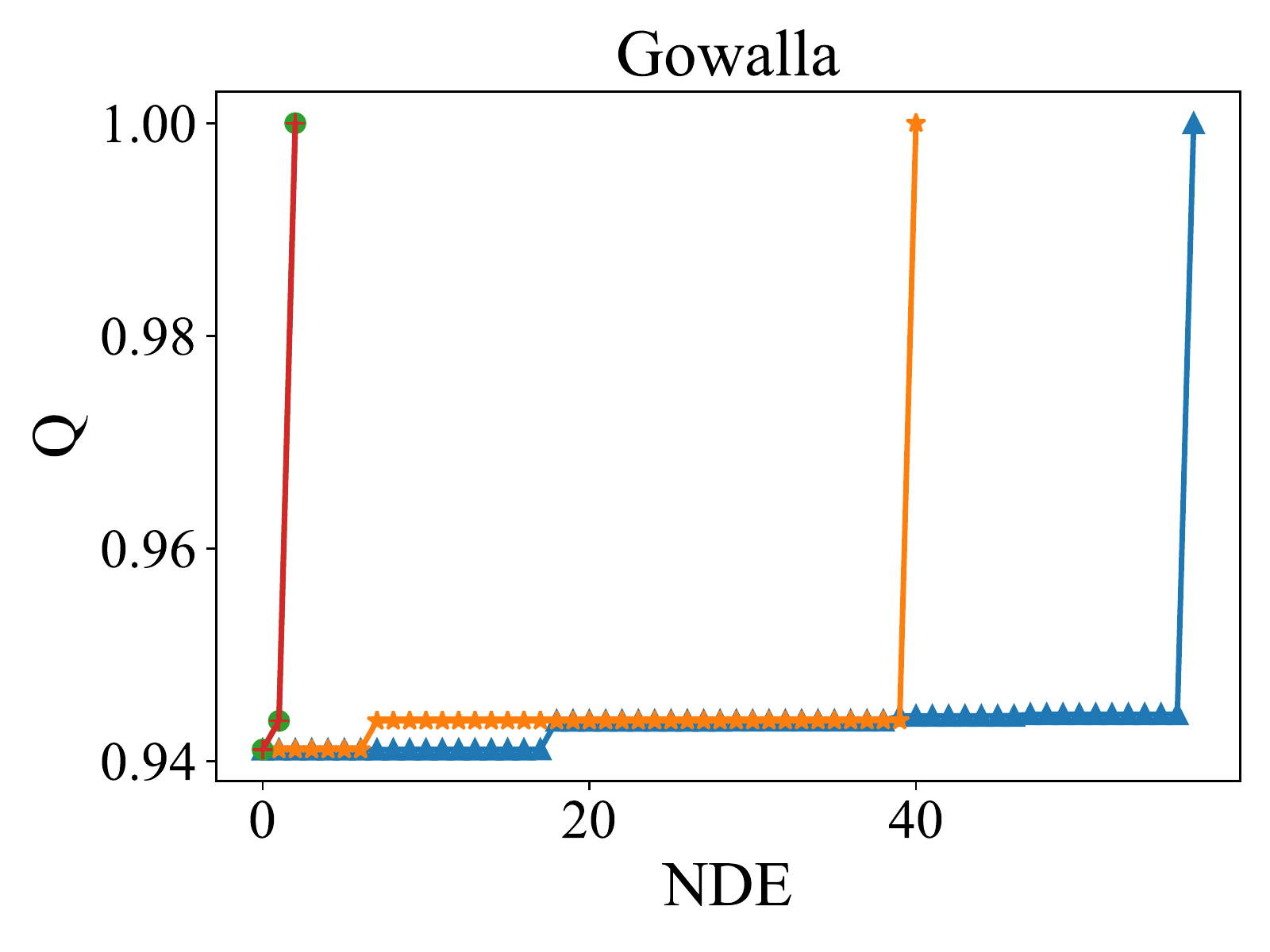}}
    \subfigure{\includegraphics[width=0.24\linewidth]{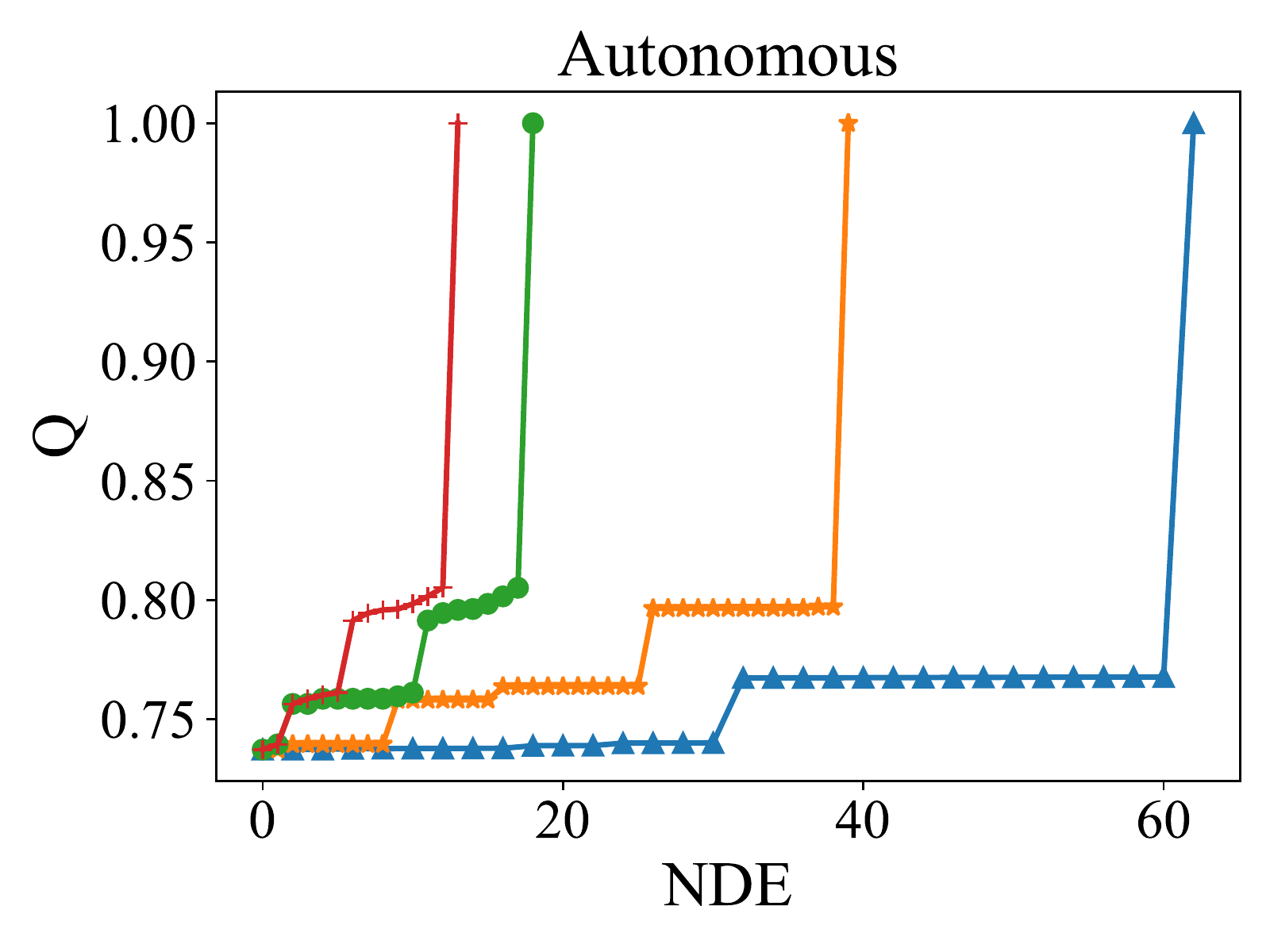}}
    \subfigure{\includegraphics[width=0.24\linewidth]{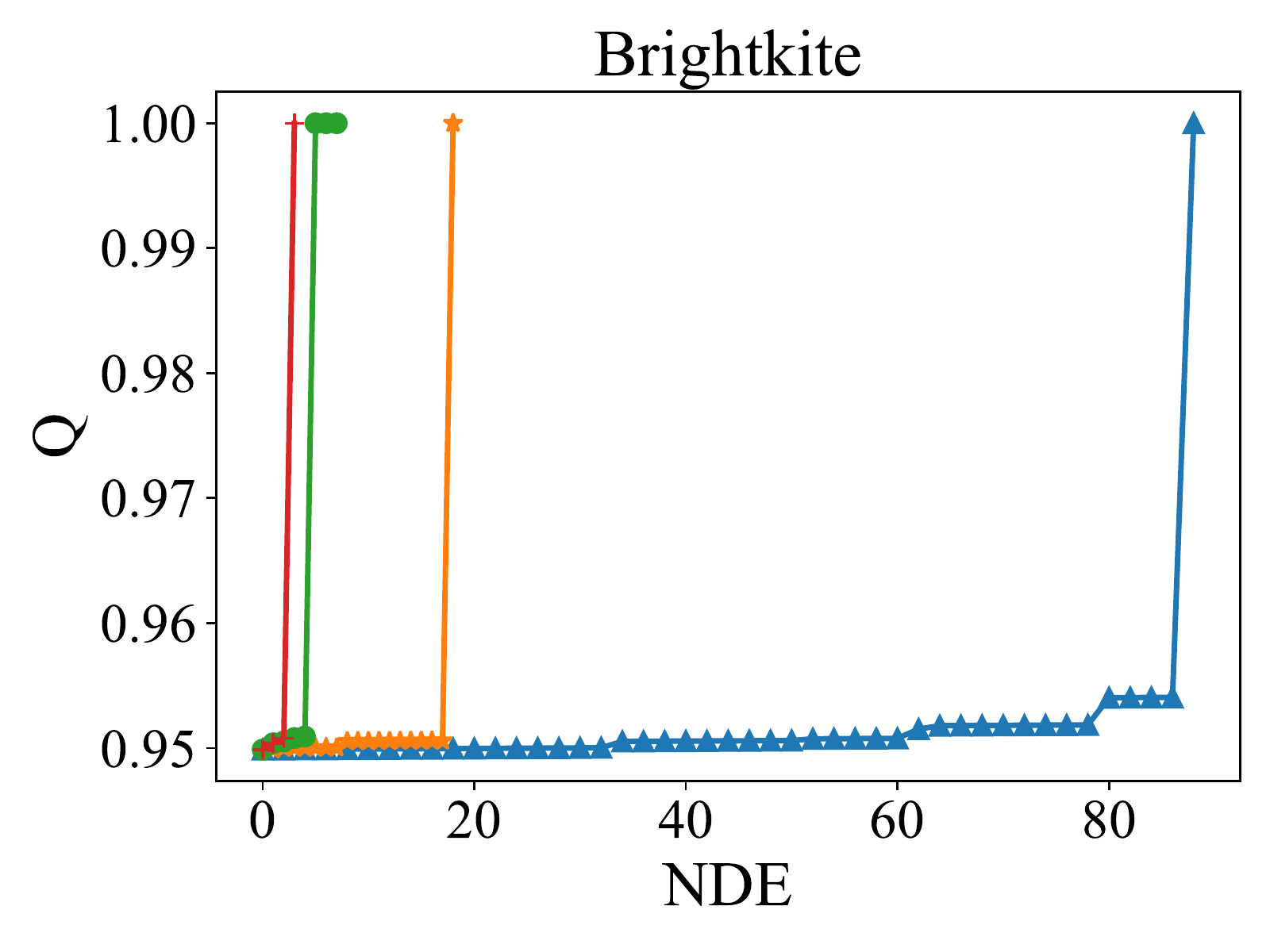}}
    \subfigure{\includegraphics[width=0.5\linewidth]{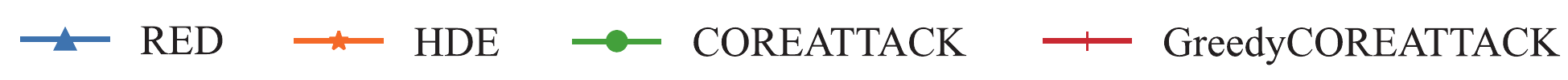}}
    \vspace{-4mm}
    \caption{The relationship between $Q$ index and NDE on four networks including GitHub, Gowalla, Autonomous, and Brightkite, by adopting RED, HDE, COREATTACK and GreedyCOREATTACK.}
    \label{figure:Q}
\end{figure*}

\subsection{Results and Discussion}
The experimental results are shown in Table~\ref{tab:compare3} and Table~\ref{tab:compare2}, which contrastively exhibit the performance comparison of COREATTACK and GreedyCOREATTACK with the four baselines on the 15 real-world networks.

It is obvious that all the attack methods based on edge deletion, i.e., RED, HDE, COREATTACK, and GreedyCOREATTACK, behave much better than those existent attack methods based on node deletion, i.e., CKC and HDN, in terms of significantly lower NDN, NDE, ECR, and FAR. Note that for all the four edge-deleting strategies, both NDN and FAR even equal to 0, indicating that they will not make any node isolated from the network and will also not decrease the $k$-shell value of any node outside the innermost core.


In particular, compared with CKC and HD, COREATTACK and GreedyCOREATTACK are able to destroy the innermost core with much fewer edges deleted, with a relatively decrease of two orders of magnitude on average (from 5067 of CKC decrease to 23 of GreedyCOREATTACK on 15 networks).
One can easily perceive that, since the innermost core of a large-scale graph is a subgraph that could contain thousands or even tens of thousands of edges, as shown in Table~\ref{tab:dataset}, deleting a certain node from this subgraph will result in the disappearance of its contiguous edges that could account for a large percentage of the edges of the entire network.
Take the Enron-Email network in Table~\ref{tab:compare3} for example, 
in order to destroy the innermost core, CKC needs to delete 48055 edges (about 26\% of all the edge) and HDN needs to delete 1840 edges (about 1\%). In fact, the innermost core of this network is 43-core, which means each node in this innermost core owns at least 43 edges without considering the edges that have been stripped during \emph{k}-core decomposition. If a node in this innermost core is deleted, it is not difficult to imagine that a large number of adjacent edges will be removed together, which will cause an obvious damage to the network structure. By comparison, as shown in Table~\ref{tab:compare2}, our COREATTACK only deletes 32 edges (about 0.0174\%). In other words, our attack is much easier to be realized, and meanwhile the resulting perturbation on the network structure is negligible and thus has a low risk of being detected.

Moreover, as we can see from Table~\ref{tab:compare3} and Table~\ref{tab:compare2}, different from the attack strategies based on edge deletion, both CKC and HDN always have $\text{FAR}>0$ which is even larger than 50\% on certain networks, such as Human and FoodWeb. This phenomenon suggests that deleting nodes not only make the innermost core collapse, but also influences the $k$-shell values of the nodes outside the innermost core. This is reasonable, since a node inside the innermost core could be also connected to a large number of nodes outside the innermost core. Taking CKC on GitHub as an example, we only need to remove one node to make the innermost core collapse. Since this node has 7085 neighbors, we also delete 7085 links in this network. Note that the innermost core only contains 360 nodes, as shown in Table~\ref{tab:dataset}, the deletion of 7085 links will definitely influence a large number of nodes outside the innermost core. In other words, the attack strategies based on node deletion are less efficient and precise, which is the exact reason why we would like to develop edge-deleting attack methods in this work.

Now, let's focus on the results obtained by the four attack methods based on edge deletion, as shown in Table~\ref{tab:compare2}. Still, our COREATTACK and GreedyCOREATTACK behaves significantly better than RED and HDE, in terms of much smaller NDE and ECR; while GreedyCOREATTACK is slightly better than COREATTACK. Such results indicate that the edges in corona graph indeed play key roles in maintaining the innermost core of the network, removal of which will make the core collapse quickly. It's quite interesting to see that we even need to only remove 1 edge to disintegrate the innermost cores of Human, Web and FoodWeb networks, with their size equal to 60, 102, and 73, respectively. For the Web network, we have $k_{\max}=101$ and $\left|V_I\right|=102$, as shown in Table~\ref{tab:dataset}, which indicates that the innermost core of this network is a fully connected graph. Therefore, removal of any link in this innermost core will make it collapse, no matter which strategy is used. For the other two networks, both RED and HDE need to remove more than 1 edges to make the innermost core collapse, indicating the advantages of our methods.


To visualize the process of the four edge-deletion attack methods, i.e., RED, HDE, COREATTACK, and GreedyCOREATTACK, we give the relationship between $Q$ index and NDE on four networks including GitHub, Gowalla, Autonomous, and Brightkite, in Figure~\ref{figure:Q}. Similar results can be obtained on the other networks. Based on the results presented in Table~\ref{tab:compare2}, we can see that: first, the collapse of innermost core is basically completed after the deletion of a limited number of edges; second, previous deletion has weak influence on the collapse of the innermost core while \emph{the last straw breaks the camel's back} and can make the \emph{k}-core collapse at once. The fact is that the previous deletion destroys the structure of the innermost core to certain extent but does not fundamentally influence the innermost core, and the deletion of the last edge achieves the critical point of NDE to finish the last strike on the collapse of the innermost core. By comparison, our COREATTACK and GreedyCOREATTACK strategies have the much smaller critical points, i.e., significantly fewer edges are needed to be removed to break up the innermost core of a graph by using our proposed attack methods.   




\section{conclusion}\label{sec:conclusion}
In this paper, we verify that the \emph{k}-core structure of many real-world networks is quite vulnerable under adversarial attacks. Based on the theoretical analysis of \emph{k}-core structure, we propose a heuristic strategy named COREATTACK to disintegrate the innermost core (the $k$-core with the largest $k$ value) of 15 real-world networks by deleting a small number of edges. We demonstrate that COREATTACK is more effective than the existent node-deletion attack strategies, and is also better than the basic 
edge-deletion attack strategies, in terms of much smaller Edge Change Rate (ECR) and False Attack Rate (FAR). Moreover, in order to further improve the effectiveness, we design a greedy version of COREATTACK, namely GreedyCOREATTACK, which is even better than COREATTACK, achieving the state-of-the-art attack results. 

Since \emph{k}-core structure plays significant roles in information diffusion, graph algorithm design etc., it would be of great interest to propose strategies to defense against such adversarial attacks, which belongs to our future work.

\clearpage


\bibliographystyle{ACM-Reference-Format}

\end{document}